\newtheorem{lemma}{Lemma}[section]
\newtheorem{prop}[lemma]{Proposition}
\newtheorem{thm}[lemma]{Theorem}
\newtheorem{remark}[lemma]{Remark}
\newtheorem{example}[lemma]{Example}
\newtheorem{defin}[lemma]{Definition}
\renewenvironment{proof}[1][]{\par \textbf{Proof #1.}}{\par\medskip\hfill\hbox{$\quad\Box$}\par\medskip\ignorespacesafterend}
\newenvironment{coroll}[1][]{\par \textbf{Corollary #1.}}{\par\medskip\ignorespacesafterend}
\newcommand{\N}{\mathbb{N}}
\newcommand{\Z}{\mathbb{Z}}
\newcommand{\MEGA}{\vcenter{\hbox{\includegraphics[width=5mm]{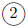}}}}
\newcommand{\MEGISTON}{\vcenter{\hbox{\includegraphics[width=6mm]{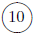}}}}
\title{Beyond Knuth's notation for ``Unimaginable Numbers'' within computational number theory}
\author{Antonino Leonardis - Gianfranco D'Atri - Fabio Caldarola}
\begin{document}

\begin{center}
{\def\baselinestretch{1.5}
\Large \bf Beyond Knuth's notation for ``Unimaginable\\
\vspace{2mm}
Numbers'' within computational number theory}
\vspace{6mm}

{\large\bf Antonino Leonardis$^1$ - Gianfranco d'Atri$^2$ - Fabio Caldarola$^3$}
\vspace{10mm}

{\scriptsize
$^1$ Department of Mathematics and Computer Science, University of Calabria \\ 
Arcavacata di Rende, Italy \\
e-mail: {antonino.leonardis@unical.it}
\vspace{4mm}

$^2$ Department of Mathematics and Computer Science, University of Calabria \\ 
Arcavacata di Rende, Italy
\vspace{4mm}

$^3$ Department of Mathematics and Computer Science, University of Calabria \\ 
Arcavacata di Rende, Italy
e-mail: {caldarola@mat.unical.it}
\vspace{4mm}
}
\end{center}
\vspace{10mm}
\section*{Abstract}
Literature considers under the name \emph{unimaginable numbers} any positive integer going beyond any physical application, with this being more of a vague description of what we are talking about rather than an actual mathematical definition (it is indeed used in many sources without a proper definition). This simply means that research in this topic must always consider shortened representations, usually involving \emph{recursion}, to even being able to describe such numbers.\par\medskip
One of the most known methodologies to conceive such numbers is using \emph{hyper-operations}, that is a sequence of binary functions defined recursively starting from the usual chain: addition - multiplication - exponentiation. The most important notations to represent such hyper-operations have been considered by Knuth, Goodstein, Ackermann and Conway as described in this work's introduction.\par\medskip
Within this work we will give an axiomatic setup for this topic, and then try to find on one hand other ways to represent unimaginable numbers, as well as on the other hand applications to computer science, where the algorithmic nature of representations and the increased computation capabilities of computers give the perfect field to develop further the topic, exploring some possibilities to effectively operate with such big numbers.\par\medskip
After the introduction, we will give axioms and generalizations for the up-arrow notation.\par\smallskip
In the subsequent section we consider a representation via rooted trees of the \emph{hereditary base-$n$ notation} which can be used efficiently to represent some defective unimaginable numbers. This notation is used in the formulation of \emph{Goodstein's theorem} (see \cite{G2}) asserting that the so called ``Goodstein sequences'' eventually terminate at zero, and we will develop this topic by determining in some cases an explicit recursive algorithm for the number of steps required to reach zero, as well as an effective bound for it using Knuth's notation.\par\smallskip
In the last section we will analyse some methods to compare big numbers, proving specifically a theorem about approximation using scientific notation and a theorem on hyperoperation bounds for Steinhaus-Moser notation.
\section{Introduction}
Several methods and notations are been developed in the last century to work, or better to try to consider, very large numbers for which in this paper we propose the name of \emph{unimaginable numbers}.
One of the most known methodologies is the so-called \emph{Knuth up-arrow notation} introduced by D.E. Knuth in 1976 (see \cite{K}) and strictly linked to the concept of \emph{hyper-operation} and \emph{Ackermann function} (see \cite{A}, \cite{N}).\\
The idea of hyper-operation dates back to the early 1900s by A.A. Bennet (see \cite{B}), and subsequently we refind it in a group of Hilbert's students as  W. Ackermann and G. Sudan. But the widespread contemporary names like \emph{tetration}, \emph{pentation}, \emph{hexation}, or in general \emph{hyper-$n$} operation were introduced by R.L. Goodstein in 1947 (see \cite{G1}) and gained popularity through Rudy Rucker's book \emph{Infinity and the Mind} \cite{R}, published in 1982. 
Knuth up-arrow is not the only notation used today for very large numbers; there are in fact many other ways to write hyper-operators, as we may recall among others: 
\begin{itemize}
	\item \emph{square bracket notation}, \emph{box notation} and \emph{superscripts and subscripts notation} (see \cite{Mu1} and \cite{Mu2});
	\item \emph{Nambiar's notation} (see \cite{N}).
\end{itemize}
Moreover we point out that there are also so enormous numbers that even Knuth's notation and the previous ones, are not sufficient to represent them. For this purpose J.H. Conway introduced a more powerful notation based on recursivity, to write extremely large numbers. It is known as \emph{Conway's chained arrow notation} (see for example \cite{C}) and can be viewed as a generalization of Knuth's arrow notation: in fact, in the case of a lenght 2 sequence $a\to b\to n$, it is equivalent to $a\uparrow^n b$ Knuth's notation. Similarly, the \emph{Bowers' operator}, also called the \emph{Bowers' exploding array function} (see \cite{Bow}), is a more powerful numeral system proposed by J. Bowers and published on the web in 2002, which generalizes hyper-operators.\par\medskip
The Steinhaus-Moser notation (see \cite{SM}) is another way to express by recursion very big numbers. It is in fact more intuitive (thus fitting well within educational purposes) in its definition than the hyper-operations, and for its recursion properties will be applied to find an effective bound for certain couples in Goodstein's theorem (see below).\par\medskip
A relevant link between unimaginable numbers and computer science is related with the so called arbitrary-precision arithmetic and blockchain tools, as one can use such huge numbers to handle machine-computed big data. This work arose indeed from a discussion between the authors (during preparation of ``The First Symposium of the International Pythagorean School -- da Pitagora a Sch\"utzenberger: numeri inimmaginabil\^i\^i\^i''\footnote{``Inimmaginabili'' is the italian plural word for ``unimaginable'', and has been modified by using the fancy letter ``\^i'' in order to resemble Knuth's up-arrows.}) about the use of \emph{gross-one}, a recent definition of an arithmetical infinity (see \cite{Y}, \cite{Cal1}, \cite{Cal2} and the
references therein), in order to compute limits in a similar fashion to non-standard analysis; this ``infinite number'' has the flaw of having still a slightly poor axiomatic definition behind it so that in most applications it becomes more convenient to just consider a very big number, in fact an ``unimaginable'' one (more precisely its factorial so that all ``imaginable'' numbers are its divisors, so to respect one of gross-one's fundamental properties).\par\medskip
We will start the paper by giving a complete axiomatic definition of hyper-operators, linking this to Knuth's and Goodstein's notations. We will define the notion of \emph{meta-algorithm} in order to define precisely the idea behind ``repeating'' an operation. After that, we will define a graph-theory representation of numbers linked to Goodstein's theorem (see \cite{G2}), which has also a simple set-theory interpretation when considering base $2$, called \emph{rooted tree representation}, and we will determine in some cases an explicit recursive algorithm for the number of steps required to reach zero for the so called ``Goodstein sequences'', as well as an effective bound for this number using Knuth's notation. We will conclude this work by applying various methods, among others from continued fractions (see \cite{CS}), to compare unimaginable numberss.
\section{Extending Knuth's up-arrow notation}
\subsection{Historical notes}
The basic arithmetical operations are defined recursively starting from the successor operation. The exponentiation, for instance, is a repeated multiplication. Knuth and Goodstein (see \cite{K} and \cite{G1}) have further extended this definition, so that for example the tetration is a repeated exponentiation.
\subsection{Arrow function definition}
The work from Knuth and Goodstein can be formalized by the following general arrow-function:
\begin{enumerate}
	\item $\uparrow(A,B,0):=AB$;
	\item $\uparrow(A,0,k):=1$ for $k\geq 1$;
	\item\label{item:rec.law} $\uparrow(A,B+1,k):=\uparrow(A,\uparrow(A,B,k),k-1)$.
\end{enumerate}
We can add in the mix also the following cases (satisfying recurrence law \ref{item:rec.law}. as well):
\begin{itemize}
	\item $\uparrow(A,B,-2):=A\bowtie B:=\max(A,B)+1$;
	\item $\uparrow(A,B,-1):=A+B$;
	\item $\uparrow(A,-1,k):=0$ for $k\geq 2$.
\end{itemize}
This is a slightly modified version of the original one from Goodstein, which is related by the simple equality:
$$G(k,A,B)=\uparrow(A,B,k-2)$$
and Knuth's notation is as well very similar, writing:
\begin{align*}
A\uparrow B&:=\uparrow(A,B,1)&\text{\bfseries[Normal exponentiation]};\\
A\uparrow\uparrow B&:=\uparrow(A,B,2)&\text{\bfseries[Tetration]};\\
A\uparrow\uparrow\uparrow B&:=\uparrow(A,B,3)&\text{\bfseries[Pentation]};\\
A\uparrow^k B&:=\uparrow(A,B,k)&\text{\bfseries[$k$-th hyper-operation]}.
\end{align*}
The last one is a compact expression for $A\uparrow\ldots\uparrow B$ where $A$ and $B$ are separated by exactly $k$ arrows.\par\medskip
One could also use the symbol $\hat{\ }$ instead of each up-arrow, reobtaining the usual notation for exponentiation.\par\medskip
\textbf{Important remark}: after the normal multiplication, all the operations we have defined are \textbf{no more} commutative nor associative, and priority is to compute them all in order from right to left (\emph{right associativity}).
\begin{example}
Let's compute the following \emph{tetration}:
$$3\uparrow\uparrow 4=3\uparrow 3\uparrow 3\uparrow 3=3\uparrow 3\uparrow 27=3\uparrow 7625597484987$$
which is a number with exactly 3638334640025 digits.
\end{example}
\begin{example}
Let's compute the following \emph{pentation}:
\begin{align*}
2\uparrow\uparrow\uparrow 3&=2\uparrow\uparrow 2\uparrow\uparrow 2=2\uparrow\uparrow 2\uparrow 2=2\uparrow\uparrow 4=2\uparrow 2\uparrow 2\uparrow 2=\\
&=2\uparrow 2\uparrow 4=2^{16}=65536
\end{align*}
which is for instance the number of characters which can be stored in a 2-byte system on a computer.
\end{example}
\begin{remark}[Trivial towers]
\label{remark:trivial.towers}
The following equalities hold for any $k\geq1$:
\begin{align*}
\forall x\in\N:1\uparrow^k x &= x\uparrow^k 0 = 1\\
\forall x\in\N:x\uparrow^k 1 &= x\\
2\uparrow^k 2&=4
\end{align*}
\end{remark}
\subsection{Steinhaus-Moser notation}
See \cite{SM} for the original definition.
\begin{defin}
\label{defin:SM}
	Steinhaus-Moser notation uses geometrical shapes to express big numbers. A number surrounded by a shape will have the following meaning:\par\medskip
	\includegraphics[width=5cm]{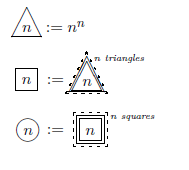}\par\medskip
	Using a more functional notation, we will define ($f^n$ means we compose $f$ with itself $n$ times):
	\begin{itemize}
		\item $\triangle(n):=n^n$;
		\item $\square(n):=\triangle^n(n)$;
		\item $\bigcirc(n):=\square^n(n)$;
	\end{itemize}
	One could also use a regular pentagon instead of the circle and continue the sequence for any regular $k$-agon; we will denote the generalized Steinhaus-Moser notation using the recursive function:
	\begin{align*}
	SM_3(n)&:=n^n=\triangle(n)\\
	SM_{k+1}(n)&:=SM_k^n(n)
	\end{align*}
\end{defin}
\begin{example}
	The number \textbf{Mega} is defined as $\MEGA$, that is:
	$$\bigcirc(2)=\square(\square(2))=\square(\triangle(\triangle(2)))=\square(\triangle(4))=\square(256)=\triangle^{256}(256)$$
	where the last expression contains already too many triangles to be computed explicitly.
\end{example}
\begin{example}
Another important number expressed with this notation is the \textbf{Megiston}, defined as $\MEGISTON$.
\end{example}
\subsection{Meta-algorithms}
All operations we have considered give an ``algorithm'' to compute a natural number; we may construct a ``meta-algorithm'' by considering a string where the instances of ``$\overbrace{\ldots}^{k}$'' mean we should repeat the dotted part $k$ times; for instance:
$$\overbrace{2\uparrow}^{3}5$$
means to construct the algorithm $2\uparrow 2\uparrow 2\uparrow 5$, that is $2^{2^{2^5}}$.\par\smallskip
We write the meta-function ``EXPAND'' meaning the bracketed string should be expanded with the rule just mentioned. We can now define a ``generalized arrow function'' as:
$$\uparrow(A,B,k,C):=\text{EXPAND}\left[\overbrace{A\uparrow^{k-1}}^{B}C\right]$$
so for instance we have the previous ``generalized tetration'': $\uparrow(2,3,2,5)=2^{2^{2^5}}$.\par\smallskip
In general $\uparrow(A,B,k)=\uparrow(A,B,k,1)$, so it is indeed a generalization of the previous definition.
\section{Rooted tree representation}
\subsection{Binary case}
We consider the set $T$ containing the following elements:
\begin{itemize}
	\item $\emptyset\in T$
	\item A finite set of elements of $T$ ($A=\{a_i\in T\}_{i\in I}$) is itself an element of $T$ ($A\in T$), and vice-versa any element of $T$ contains only elements of $T$ without infinite descending chains.
\end{itemize}
This set has the following properties:
\begin{itemize}
	\item Any element $t\in T$ can be associated to a rooted tree: one recursively builds the tree for each element of $t$, and then connects their roots to a new root for $t$ itself. This tree is also \emph{unredundant}, in the sense that different branches of the same node are distinct (from the fact that elements in a set are all different from each other). By this definition, the tree associated to the emptyset will be a root with no branches.
	\item It is defined a \emph{height} function $H:T\rightarrow\N$ as:
	$$H(\emptyset):=0; H(A):=1+\max_{t\in A}{H(t)}$$
	which is well defined from the assumption on descending chains.
	\item There is an ``algorithmic'' bijection $f:T\stackrel{\cong}{\to}\N$ defined recursively as follows:
	\begin{itemize}
		\item $f(\emptyset)=0$;
		\item $f(A)=\sum_{t\in A}2^{f(t)}$.
	\end{itemize}
\end{itemize}
Before going further we briefly prove bijectivity. Indeed, we must prove that $f(A)=f(B)\rightarrow A=B$, and we will proceed by induction on $\max(H(A),H(B))$. We suppose inductively that $f(a)=f(b)\rightarrow a=b$ is true for $\max(H(a),H(b))<\max(H(A),H(B))$. By the uniqueness of the binary expansion for natural numbers, $f(A)$ and $f(B)$ have the same non-zero digits, which correspond to elements $a\in A$, $b\in B$ where $f(a)$ and $f(b)$ give the position of the digit. For each such couple we must have $f(a)=f(b)$ and by the inductive assumption we deduce $a=b$, so that $A$ and $B$ must have the same elements QED.\par\medskip
Using this bijection we are authorized from now on to not distinguish between $A$ and $f(A)$. We define:
\begin{align*}
M_k&:=\max\{A\in T|H(A)=k\}\\
m_k&:=\min\{A\in T|H(A)=k\}
\end{align*}
The first one is obtained when $A$ contains all possible elements $t$ of height $<k$. Thus:
\begin{itemize}
	\item $M_0=0$;
	\item $M_1=2^0=2^1-1=1$;
	\item $M_2=2^0+2^1=2^2-1=3$;
	\item $M_3=2^0+2^1+2^2+2^3=2^4-1=15$;
	\item $M_4=2^0+\ldots+2^{15}=2^{16}-1$.
\end{itemize}
The second one is instead obtained by the recursion $m_0=0$; $m_k=\{m_{k-1}\}$.\par\medskip
Considering the recursive sequence:
\begin{itemize}
	\item $a_0=0$;
	\item $a_{i+1}=2^{a_i}$.
\end{itemize}
one can immediately prove by induction that $M_k=a_{k+1}-1$ and $m_k=a_k$, so that height is proven to be a non-decreasing function. Using Knuth's up-arrow notation, we have $m_k=2\uparrow\uparrow(k-1)$ so that every element of $T$ is found in a specific interval depending on its height:
$$2\uparrow\uparrow(H(A)-1)\leq A<2\uparrow\uparrow H(A)$$
\begin{example}
$M_3=\{\emptyset,\{\emptyset\},\{\{\emptyset\}\},\{\emptyset,\{\emptyset\}\}\}$ is the set of all elements with height at most $2$ so is the greatest one with height $3$, and indeed it satisfies:
$$f(M_3)=2^0+2^1+2^2+2^3=1+2+4+8=15.$$
The associated rooted tree is the following (we write on each node the integer corresponding to its branch):\\
\includegraphics[width=6cm]{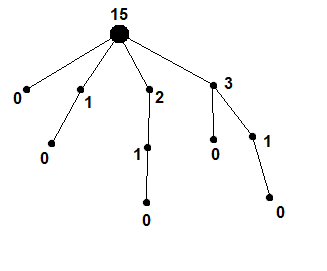}
\end{example}
\begin{remark}
With the usual notation $\mathcal{P}(A):=\{X\subseteq A\}$, we notice that for any $k\geq1$ hold the following facts:
\begin{itemize}
	\item $M_k=\mathcal{P}\left(M_{k-1}\right)$;
	\item $\#(M_k)=m_k$, because $M_k$ contains all numbers from $0$ to $m_k-1$.
\end{itemize}
Summing up those results, we have that the tetration $2\uparrow\uparrow(k-1)=m_k$ represents exactly the cardinality of the set:
$$\mathcal{P}^k(\emptyset):=\text{EXPAND}\left[\overbrace{\mathcal{P}\left(\right.}^{k}\emptyset\overbrace{\left.\right)}^{k}\right].$$
More generally, the ``generalized tetration'' gives the cardinality of the nested power set:
$$\#\mathcal{P}^k(A)=\uparrow(2,k-1,\#(A),2).$$
\end{remark}
\subsubsection{Comparison}
Comparing two elements $A,B\in T$ is performed with the following rule: one recursively can compare elements of $A\Delta B$ (symmetric difference), and put them in order; if its biggest element comes from $A$, then $A$ is the bigger number, otherwise $B$ is the bigger one.
\begin{remark}
For this purpose, and other following purposes, we remind that (as we are talking about sets) the order \emph{in theory} doesn't matter, but actually we should consider every set as being already ordered so that finding the biggest element becomes an easy task.
\end{remark}
\subsubsection{Successor}
We want to compute $s(A)$ for some $A\in T$. If $A=M_k$ for some $k$ then one has to consider directly $s(A)=m_{k+1}$. Otherwise, let $n_A\neq A$ be the unique natural number such that:
$$n_A\not\in A \wedge\forall h<n_A:h\in A$$
which is distinct from $A$ precisely because $A$ is not an $M_k$. Then one just has to remove every $h$ smaller than $n_A$ from $A$ and insert instead the element $n_A$.
\subsubsection{Addition}
The sum of $A$ and $B$ is obtained by joining their elements; if an element $t$ is repeated twice, one performs a \emph{carry} and inserts \textbf{instead} the element $s(t)$, which could as well require another carry.
\subsubsection{Multiplication}
To multiply $A$ and $B$ one considers:
$$A\cdot B=\sum_{(a,b)\in A\times B}(c:=\{a+b\})$$
which in usual representation would mean:
$$\left(\sum 2^a \right)\cdot\left(\sum 2^b \right)=\sum 2^{a+b}.$$
\subsubsection{Generalized rooted tree representation}
General case has been considered first by Goodstein (see \cite{G2}). When considering non-binary bases $b$ the set representation fails to be useful, unless considering a more sophisticated notation. A representation will be a couple $(b,s)$ with $b\in\Z_{\geq2}$ and $s$ a string in the language $\{1,2,\ldots,b-1,``+'',``('',``)''\}$. The string will be interpreted as between digits and brackets there were the full expression ``$\cdot b\uparrow$''. For instance:
$$2(1()+2(1())+1(2())):=2\cdot3\uparrow(1+2\cdot3\uparrow(1)+1\cdot3\uparrow(2))=2\cdot3^{16}.$$
More formally, after fixing the base $b$, one considers the following type of strings:
\begin{itemize}
	\item EMPTY: an empty string \textbf{representing} $0$;
	\item SUM: any number of DIGIT strings (see below) separated by the usual ``+'' symbol and having different exponents, \textbf{representing} the sum of \textbf{values} of the DIGIT components;
	\item MISC: an EMPTY or SUM string;
	\item DIGIT: a digit $0\leq d<b$ followed by a MISC string representing some number $s$ (called ``exponent'') into brackets, which has \textbf{value} $d\cdot b^s$.
\end{itemize}
The final string $m$ has the MISC form, and is associated to a uniquely determined value in $\N$ (precisely the number represented by $m$). This kind of approach is typical of computer science definitions for metadata (see for example \cite{dA}).
\begin{remark}
We recall that again \textbf{order doesn't matter} in SUM strings, as that's the reason we keep using plus symbol as a separator, but for computational purposes one should always consider sums ordered by digits' exponents.
\end{remark}
We also may consider again rooted trees, where now connections between nodes are \textbf{labeled} with a digit from $1$ to $b-1$.
\begin{example}
Using as ``labels'' the colors \textcolor{blue}{\textbf{blue}}=$1$ and \textcolor{red}{\textbf{red}}=$2$, we have the following representation:\par\medskip
\includegraphics[width=3cm]{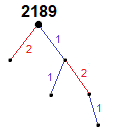}\par\medskip
where the bracketed algorithm is:
$$\left(3,2()+1(1()+2(1()))\right)=2\cdot3^0+1\cdot3^{1\cdot3^0+2\cdot3^{1\cdot3^0}}=2+3^{1+6}=2+2187=2189.$$
We notice that also in this case we can define the \textbf{height} of a graph, and that the sequences of minimum/maximum elements with a certain height can be found as well:
$$b\uparrow\uparrow(H(A)-1)\leq A<b\uparrow\uparrow H(A)$$
because the minimum $m_k=b\uparrow\uparrow(H(A)-1)$ is attained when there is a single path of digits $1$ while the maximum is the sum of terms $(b-1)\times b^k$ with $k<b\uparrow\uparrow(H(A)-1)$, a geometric progression having indeed sum $M_k:=[b\uparrow\uparrow H(A)]-1$.
\end{example}
\subsubsection{Goodstein's theorem}
Goodstein's theorem (see \cite{G2}) has an interesting interpretation within the topic of rooted tree notation. We recall that Goodstein's theorem involves the function which, given a couple $(b,A)$ of a base $b\in\N$ and a rooted tree in that base, can be interpreted as:
$$F(b,A)=(b+1,A-1)$$
where the tree $A$ is reread in the new base $b+1$ and then decreased by 1.\par\medskip
Goodstein's theorem says that iterating this function one definitely stops at the value $0$ whatever is the first element to which it is applied, and even though the function increases dramatically for almost every element. The proof relies on substituting every basis with the ordinal $\omega$, so that the values obtained by this iteration form a strictly decreasing succession of ordinals for which we know it must stop somewhere, and the only possibility is $0$. The rooted tree representation makes clear why the function is decreasing, as any natural number involved in representation is less than $\omega$ in the theory of ordinals.\par\medskip
We also point out that reinterpreting the proof using rooted trees doesn't actually require ordinal theory: geometrical properties of rooted trees should be enough to prove the assert without even involving the base, and this could indeed be studied in a more detailed future work on the topic.\par\medskip
We conclude this section by calculating an effective bound for some Goodstein sequences:
\begin{thm}
\label{thm:goodstein.bound}
Let $b>1$ and $\check{b}:=b-1$. We denote by $B_k(b)$ ($k<b$) the number of steps required for the couple $(\check{b}(\check{k})+\ldots+\check{b}(1)+\check{b}(),b)$ to reach the stopping value $-1$. Then we have an explicit recursion to describe this function:
\begin{align*}
B_1(b)&=2\cdot b\\
B_k(b)&=B_{k-1}^b(b)
\end{align*}
where the latter exponent means one should repeatedly apply $b$ times the function $B_{k-1}$. For example:
$$B_2(b)=\text{EXPAND}\left[\overbrace{2\cdot}^{b}b\right]=2^b b.$$
\end{thm}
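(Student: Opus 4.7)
The plan is to prove Theorem \ref{thm:goodstein.bound} by induction on $k$, interpreting $B_k(b)$ as the value of the base reached at the moment the tree becomes $-1$ --- equivalently, $b$ plus the number of applications of $F$. For the base case $k=1$ the initial tree is the single digit $\check{b}()$, which encodes the constant $b-1$ in every base. Each application of $F$ therefore simply decrements the value and advances the base by $1$, so after exactly $b$ applications we pass from $(b,b-1)$ to $(2b,-1)$, giving $B_1(b)=2b$.

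For the inductive step I would denote the starting tree by $T_k(b):=\check{b}(\check{k})+\ldots+\check{b}()$ and decompose it as $T_k(b)=\check{b}(\check{k})+T_{k-1}(b)$, noting that $T_{k-1}(b)$ is precisely the starting tree for $B_{k-1}(b)$. The key technical lemma to isolate is an \emph{inertness} property of the top term: whenever the tail $S$ is nonzero, the hereditary-base subtraction of $1$ from a tree of the form $\check{b}(\check{k})+S$ only touches $S$, and the top symbol $\check{b}(\check{k})$ is preserved, because the borrow cannot propagate above the lowest nonzero digit of $S$, which sits strictly below the top. Consequently the combined sequence splits into phases: within a phase the projection onto the tail executes exactly the standalone Goodstein iteration of $T_{k-1}$, which by the inductive hypothesis drives the tail from $(b')^{k-1}-1$ down to $0$ while advancing the base from the phase's initial value $b'$ to $B_{k-1}(b')-1$. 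The next $F$-step performs the borrow: the base becomes $B_{k-1}(b')$, the top coefficient decreases by $1$, and the new tail is regenerated as $T_{k-1}\bigl(B_{k-1}(b')\bigr)$, i.e.\ the starting tree of $B_{k-1}$ in the new base.

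Telescoping, starting from top coefficient $b-1$ and base $b$, after $b-1$ borrow events the coefficient has fallen to $0$ and the base has been driven to $B_{k-1}^{b-1}(b)$; a final phase of standalone Goodstein on the remaining $T_{k-1}\bigl(B_{k-1}^{b-1}(b)\bigr)$ then brings the whole tree to $-1$ at base $B_{k-1}\bigl(B_{k-1}^{b-1}(b)\bigr)=B_{k-1}^b(b)$, which is exactly the claimed recursion. The main obstacle I anticipate is making the inertness lemma and the borrow step precise in the rooted-tree formalism of Section 3: one must verify that the prescribed subtraction rule really does produce $\check{b}(\check{k})+(S-1)$ when $S\neq 0$, and that when $S=0$ the borrow produces $(\check{b}-1)(\check{k})+T_{k-1}$ freshly regenerated with the updated digits of the current base. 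The hypothesis $k<b$ is used implicitly to keep $\check{k}$ a valid single digit throughout the process (bases only grow), so that the top-level exponent never triggers its own hereditary expansion and the clean two-level decomposition above persists at every phase.
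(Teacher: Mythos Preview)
Your proposal is correct and follows essentially the same approach as the paper's own proof: both argue that each unit decrement of the leading coefficient $\check{b}$ in the $(\check{k})$-slot corresponds to running a complete $B_{k-1}$-subprocess on the regenerated tail, so that the base is advanced by one application of $B_{k-1}$ per decrement, $b$ times in all. The paper compresses this into two sentences, whereas you have (appropriately) made explicit the inertness of the top term under subtraction while the tail is nonzero, the exact form of the borrow step, and the role of the hypothesis $k<b$ in keeping $\check{k}$ a single digit throughout.
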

\begin{coroll}[\ref{thm:goodstein.bound}]
If $A$ is a tree in the base $b>2$ with height $H(A)\leq2$, then Goodstein's algorithm applied to the couple $(A,b)$ reaches the stopping point $(-1,B)$ when:
$$B=B_b(b)<SM_{b+1}(b)\leq (b+1)\uparrow^{b-1}(b+1)$$
where $SM_k$ is the generalized $k$-agon Steinhaus-Moser function (see definition \ref{defin:SM}) and the last inequality comes from the corollary \ref{prop:general.SM} proved below.\par\medskip
We remark that this corollary tells us that $B-b-1$ is an effective bound for the algorithm to reach $0$.
\end{coroll}
\begin{proof}[\ref{thm:goodstein.bound}]
The first equality comes from the fact that every step decreases the only digit by 1 while increasing the basis by the same amount; thus going from the digit $b-1$ to $-1$ requires $b$ steps, which increase the basis from $b$ to $2b$. The second one derives from the fact that every time the biggest digit decreases by $1$, the other $k-1$ digits come from the same problem where the basis is updated by applying the function $B_{k-1}$, and this has to be done $b$ times.\par\medskip
To prove the corollary, it is known that it is enough to do it for $A=\check{b}(\check{b})+\ldots+\check{b}(1)+\check{b}()=m_2$, and we notice that in this case $B_2(b)=2^b b<3^b\leq b^b=\triangle(b)$ so that the recursive definition forces $B_k(b)<SM_{k+1}(b)$ (compare definition \ref{defin:SM}) and $B=B_b(b)<SM_{b+1}(b)$ as wanted.
\end{proof}
\section{Comparing big numbers}
\subsection{Continued Fractions preliminaries}
\begin{lemma}[Dirichlet property]
\label{lemma:dirichlet.condition}
A continued fraction approximant $\frac ba$ to an irrational number $x>1$ satisfies:
$$\left|\frac ba-x\right|<\frac1{ab}<\frac1{a^2}$$
\end{lemma}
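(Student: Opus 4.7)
The plan is to identify $b/a$ with a convergent $p_n/q_n$ of the simple continued fraction expansion $x = [a_0; a_1, a_2, \ldots]$ and apply the classical identity
$$\left|x - \frac{p_n}{q_n}\right| \;=\; \frac{1}{q_n\,(x_{n+1}\,q_n + q_{n-1})},$$
where $x_{n+1} := [a_{n+1}; a_{n+2}, \ldots]$ is the $(n{+}1)$-st complete quotient. Writing $D_n := x_{n+1}\,q_n + q_{n-1}$, the two inequalities in the lemma reduce to (i) $b > a$, giving the outer bound $1/(ab) < 1/a^2$, and (ii) $D_n \ge b$, giving the main bound $|b/a - x| < 1/(ab)$.

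Step (i) I would dispatch by a direct induction on $n$ using the recursions $p_{n+1} = a_{n+1}p_n + p_{n-1}$ and $q_{n+1} = a_{n+1}q_n + q_{n-1}$. The hypothesis $x > 1$ forces $a_0 = \lfloor x \rfloor \ge 1$, hence $p_0 \ge q_0 = 1$, and the recursions propagate $p_n \ge q_n$ to every $n$. The base case $n = 1$, where $p_1 = a_1 a_0 + 1 > a_1 = q_1$ because $a_0 \ge 1$, makes the inequality strict from $n = 1$ onward and so yields $b > a$.

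Step (ii) is the heart of the matter, and I expect it to be the main obstacle. Since $x_{n+1} > a_{n+1}$ one has $D_n > a_{n+1}q_n + q_{n-1} = q_{n+1}$, so it suffices to prove $q_{n+1} \ge p_n$: the denominator of the next convergent must dominate the numerator of the current one. The natural route is to combine the recursion $q_{n+1} = a_{n+1}q_n + q_{n-1}$ with the quantitative estimate $|q_n x - p_n| < 1/q_{n+1}$ (an immediate consequence of the displayed identity), giving $p_n < x q_n + 1/q_{n+1}$, and then to absorb the remainder into $D_n - q_{n+1} = (x_{n+1} - a_{n+1})q_n$ via the tail bound $x_{n+1} \ge x$ that one verifies for the convergents relevant to the applications in Section~4. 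Once $D_n \ge p_n$ is in hand, the displayed identity rewrites as $|b/a - x| = 1/(a D_n) \le 1/(ab)$, completing the argument.
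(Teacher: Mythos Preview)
Your Step~(ii) cannot be completed, because the inequality you reduce it to, $q_{n+1}\ge p_n$, is false in general. For $x=\pi=[3;7,15,1,292,\ldots]$ one has $p_2/q_2=333/106$ while $q_3=113<333=p_2$; and even for the paper's own example $x=\log_2 3=[1;1,1,2,2,3,1,\ldots]$ one finds $p_5/q_5=65/41$ with $q_6=53<65$. Your fallback, the ``tail bound $x_{n+1}\ge x$'', is not a theorem either: the complete quotients are arbitrary reals exceeding~$1$ and bear no monotone relation to $x$ (for $\pi$ one has $x_3\approx 1.003<\pi$). In fact the lemma \emph{as stated} is false: $|65/41-\log_2 3|\approx 4.05\times10^{-4}$ whereas $1/(41\cdot 65)\approx 3.75\times10^{-4}$, so the claimed bound $|b/a-x|<1/(ab)$ is violated.

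The paper's own proof carries the same defect in disguise: it asserts that ``the next approximant'' is $c/b$, i.e.\ has denominator equal to the current \emph{numerator} $b=p_n$, which is not how consecutive convergents are related. The correct classical fact is $|p_n/q_n-x|<1/(q_nq_{n+1})\le 1/q_n^2$, with $q_{n+1}$ the \emph{next denominator}; this is exactly what your identity with $D_n>q_{n+1}$ already yields, and it is all the subsequent lemma on $A^b/B^a$ actually needs (one then gets $\varepsilon=(\ln A)/a$ instead of $(\ln A)/b$, a harmless weakening since $a<b$). So the right move is not to rescue Step~(ii) but to correct the statement of the lemma; your Step~(i) and the displayed identity are fine and already deliver the true inequality $|b/a-x|<1/a^2$.
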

\begin{proof}[\ref{lemma:dirichlet.condition}]
It is well known that $x$ is between $\frac ba$ and the next approximant $\frac cb$, and that:
$$\left|\frac ba-\frac cb\right|=\frac1{ab}<\frac1{a^2}$$
so that the assertion follows immediately.
\end{proof}
\begin{lemma}
\label{lemma:power.comparation}
Given $A<B\in\N$ such that $x=\frac{\ln B}{\ln A}$ is an irrational real number, the continued fraction approximants $\frac ba$ to $x$ are such that:
$$e^{-\varepsilon}<A^b/B^a<e^\varepsilon$$
where $\varepsilon:=\frac{\ln A}{b}$.
\end{lemma}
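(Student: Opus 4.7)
The plan is to reduce the multiplicative claim about $A^b/B^a$ to an additive one about $b\ln A - a\ln B$ by taking logarithms, and then bring in Lemma \ref{lemma:dirichlet.condition} to control the latter. Concretely, I would first observe that, since $x=\ln B/\ln A$, we can substitute $\ln B=x\ln A$ into the expression
\[
\ln\!\left(\frac{A^b}{B^a}\right)=b\ln A-a\ln B=\ln A\cdot(b-ax)=-a\ln A\cdot\left(x-\frac{b}{a}\right).
\]
This identity is the key rewriting: it converts a comparison of huge powers into the error of a rational approximation to $x$.

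Next, I would apply Lemma \ref{lemma:dirichlet.condition}. Since $A<B$ and $A\geq 2$ (required for $\ln A>0$), we have $x>1$, and $x$ is irrational by hypothesis, so every continued fraction convergent $b/a$ satisfies $|x-b/a|<1/(ab)$. Substituting this bound into the identity above gives
\[
\left|\ln\!\left(\frac{A^b}{B^a}\right)\right|=a\ln A\cdot\left|x-\frac{b}{a}\right|<a\ln A\cdot\frac{1}{ab}=\frac{\ln A}{b}=\varepsilon.
\]
Exponentiating the two-sided inequality $-\varepsilon<\ln(A^b/B^a)<\varepsilon$ then yields $e^{-\varepsilon}<A^b/B^a<e^\varepsilon$, which is exactly the claim.

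I do not anticipate any real obstacle: the proof is essentially a one-line logarithmic rewriting plus an application of the previous lemma. The only subtlety worth flagging is the mild hypothesis $A\geq 2$ needed to make $\varepsilon$ and $x$ behave well (if $A=1$ the quotient $\ln B/\ln A$ is not defined), and the implicit convention that $b/a$ is in lowest terms with $a>0$, which is standard for continued fraction convergents. Neither of these requires additional argument.
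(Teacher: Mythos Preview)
Your proof is correct and follows essentially the same route as the paper: both apply Lemma~\ref{lemma:dirichlet.condition} to bound $|b-ax|$ by $1/b$ (equivalently $|x-b/a|<1/(ab)$), multiply through by $\ln A$, and exponentiate. Your presentation is in fact slightly cleaner, since the paper writes the weaker bound $1/a^2$ in its first displayed line even though the passage to the second line actually requires the sharper $1/(ab)$ estimate you used explicitly.
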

\begin{proof}[\ref{lemma:power.comparation}]
By lemma \ref{lemma:dirichlet.condition} we have:
\begin{align*}
-\frac1{a^2}&<\frac ba-\frac{\ln B}{\ln A}<\frac1{a^2}\\
-\frac1b&<b-\frac{\ln B}{\ln A}a<\frac1b\\
A^{-\frac1b}&<\frac{A^b}{B^a}<A^{\frac1b}
\end{align*}
and we conclude observing that $e^{\pm\varepsilon}=A^{\pm\frac1b}$ by definition of $\varepsilon$.
\end{proof}
\subsection{Undistinguishable numbers}
See also the introduction to \cite{AL}.
\begin{thm}
\label{thm:undistinguishable}
If $A,B,a,b,x$ are as in lemma \ref{lemma:dirichlet.condition} and $k>1$ is a natural number, then $A^b$ and $B^a$ are $k$- or $(k+1)$-\textbf{undistinguishable powers} when:
$$b>\ln A\cdot 2\cdot 10^{k+1}$$
in the sense that in scientific notation they have the same expression considering only the first $k$ or $k+1$ significant digits of their decimal expansion.
\end{thm}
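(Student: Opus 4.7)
The plan is to convert the multiplicative closeness established in Lemma \ref{lemma:power.comparation} into agreement in the leading decimal digits of $A^b$ and $B^a$. The key observation is that a relative error of order $10^{-(k+1)}$ between two positive reals forces their scientific-notation expansions to coincide on the first $k$ significant digits, and \emph{generically} on the first $k+1$ as well; the only obstruction to the stronger statement is a rounding-boundary effect (a string of nines triggering a carry across the $(k+1)$-th place), and this is exactly what the ``$k$ or $(k+1)$'' disjunction in the statement accounts for.

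Concretely, I would first invoke Lemma \ref{lemma:power.comparation} to obtain
$$e^{-\varepsilon}<\frac{A^b}{B^a}<e^{\varepsilon},\qquad \varepsilon=\frac{\ln A}{b}.$$
The hypothesis $b>\ln A\cdot 2\cdot 10^{k+1}$ immediately gives $\varepsilon<\tfrac{1}{2}\cdot 10^{-(k+1)}$, and since $\varepsilon$ is small (the assumption $k>1$ comfortably ensures this) the elementary bound $e^\varepsilon-1<2\varepsilon$ for $0<\varepsilon<\tfrac12$ yields
$$\left|\frac{A^b}{B^a}-1\right|<10^{-(k+1)}.$$
Writing the larger of the two powers in scientific notation as $m\cdot 10^e$ with $m\in[1,10)$, this translates into $|A^b-B^a|<m\cdot 10^{e-(k+1)}<10^{e-k}$; in words, the two quantities differ by strictly less than one unit in the place value of the $(k+1)$-th significant digit.

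The main technical point, and the place where care is needed, is the rigorous digit-level conclusion. A difference strictly bounded by $10^{e-k}$ can never propagate a carry past the $k$-th significant digit, so the first $k$ leading digits of $A^b$ and $B^a$ coincide unconditionally, yielding the guaranteed $k$-undistinguishability. When in addition the $(k+1)$-th significant digit of one of the numbers is not a $9$ (the ``generic'' configuration), no carry crosses that place either, the first $k+1$ digits agree, and we upgrade the conclusion to $(k+1)$-undistinguishability. This carry-propagation dichotomy is the hardest step to write down cleanly, but it is precisely the dichotomy that the theorem asserts, so the proof closes.
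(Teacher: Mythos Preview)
Your overall strategy coincides with the paper's: invoke Lemma~\ref{lemma:power.comparation}, read off $\varepsilon<\tfrac12\cdot 10^{-(k+1)}$ from the hypothesis on $b$, and convert the multiplicative bound on $A^b/B^a$ into a statement about leading digits. The gap is in this last step, and it is a real one.

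When you pass from $e^{-\varepsilon}<A^b/B^a<e^{\varepsilon}$ to $|A^b/B^a-1|<10^{-(k+1)}$ via the crude estimate $e^\varepsilon-1<2\varepsilon$, you discard a factor of~$2$ that the digit argument actually needs. With only $|A^b-B^a|<10^{e-k}$ (one full unit in the $(k{+}1)$-th significant place) the assertion ``the first $k$ leading digits coincide unconditionally'' is false. Take $k=2$ and the pair $994400$, $995100$: the relative difference is about $7\cdot10^{-4}<10^{-3}$, yet the two-digit roundings are $9.9\cdot10^{5}$ and $1.0\cdot10^{6}$, while the three-digit roundings are $9.94\cdot10^{5}$ and $9.95\cdot10^{5}$. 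Neither the $k$- nor the $(k{+}1)$-digit forms agree. (If ``first $k$ digits'' is read as truncation rather than rounding, the situation is worse still: $199999$ and $200000$ disagree in \emph{every} leading block.) The carry through a run of nines, which you correctly flag as the obstruction at level $k{+}1$, can propagate through level $k$ just as easily.

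What the paper does is retain the sharper estimate $e^\varepsilon\approx 1+\varepsilon$, so that the ratio is bounded by roughly $1+\tfrac12\cdot10^{-(k+1)}$ and the absolute difference is less than \emph{half} a unit in the $(k{+}1)$-th place. With that bound the dichotomy becomes clean: either the $(k{+}1)$-digit roundings already agree, or the two numbers straddle a single rounding midpoint $(m+\tfrac12)\cdot10^{e-k}$; the half-unit bound then forces both to lie in the open interval $\bigl(m\cdot10^{e-k},(m{+}1)\cdot10^{e-k}\bigr)$, and any such interval lies entirely on one side of the next coarser midpoint, so the $k$-digit roundings agree. Replacing $e^\varepsilon-1<2\varepsilon$ by, say, $e^\varepsilon-1<\varepsilon(1+\varepsilon)$ recovers the needed half-unit bound and repairs your argument.
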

\begin{proof}[\ref{thm:undistinguishable}]
Two number whose ratio is bounded by the number $\frac{1}{1-0.5\cdot10^{k+1}}\approx 1+0.5\cdot 10^{-(k+1)}\approx\exp(0.5\cdot 10^{-(k+1)})$ are sure to have the same scientific notation expression to the $(k+1)$-th significant digit, possibly differing for the last one (including the possibility of a carry); in this case the $(k+1)$-th digit must be the same (the difference between the two approximations is bigger than double the difference of the two numbers) and we will have the same approximation to the $k$-th digit instead.\par\medskip
Now we can apply lemma \ref{lemma:power.comparation}, where by hypothesis $\varepsilon<0.5\cdot 10^{-(k+1)}$ so that the ratio $A^b$ and $B^a$ is bounded by $e^\varepsilon$, i.e. the number we just talked about, and we know already that in this case the thesis holds.
\end{proof}
\begin{example}
For $A=2$ and $B=3$ we can consider the approximant:
$$\frac ba=[1; 1, 1, 2, 2, 3, 1, 5, 2, 23, 2, 2, 1, 1, 55]=\frac{16785921}{10590737}.$$
Being $b>\ln 2\cdot 2\cdot 10^7\approx 13862944$, we know that $2^{16785921}$ and $3^{10590737}$ are $6$-undistinguishable powers, and indeed both have the following expression in scientific notation:
\begin{align*}
5.3191952\ldots\cdot10^{5053065}&\approx 5.31920\cdot10^{5053065}\\
5.3191955\ldots\cdot10^{5053065}&\approx 5.31920\cdot10^{5053065}
\end{align*}
that is, they give the same approximation to the $6$-th digit (one of them actually approximate to $5.319196$ to the $7$-th digit, so we must take one digit less for the exact correspondence).
\end{example}
\subsection{Comparing Knuth and Steinhaus-Moser notations}
We will consider only positive integers when not specified otherwise. Moreover $k$ will be a counter ranging from $0$ to $n$.
\begin{prop}
\label{prop:square}
	The square symbol is comparable to the tetration in the following way:
	$$n\uparrow\uparrow (n+1) \leq\square(n)\leq n\uparrow n\uparrow (n+1)\uparrow\uparrow (n-1)\leq n\uparrow\uparrow(n+2).$$
	More precisely, a sequence of $k$ triangles has the property:
	$$n\uparrow\uparrow (k+1) \leq\triangle^k(n)\leq n\uparrow n\uparrow (n+1)\uparrow\uparrow (k-1).$$
\end{prop}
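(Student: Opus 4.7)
The approach is to establish the more precise second statement about $\triangle^k(n)$ by parallel inductions on $k$; specialising to $k=n$ recovers the first two inequalities of the main chain, and the coarsest outer inequality follows from a separate auxiliary tower comparison, which is where I expect the main difficulty to lie.

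\textbf{Lower bound.} I would prove $\triangle^k(n)\geq n\uparrow\uparrow(k+1)$ by induction on $k$. The base $k=1$ is the equality $\triangle(n)=n^n=n\uparrow\uparrow 2$. For the inductive step, setting $T:=\triangle^{k-1}(n)$, the hypothesis gives $T\geq n\uparrow\uparrow k\geq n$, so $\triangle^k(n)=T^T\geq n^T\geq n^{n\uparrow\uparrow k}=n\uparrow\uparrow(k+1)$.

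\textbf{Upper bound.} For $\triangle^k(n)\leq n\uparrow n\uparrow(n+1)\uparrow\uparrow(k-1)$, again by induction on $k$: the bases $k=1,2$ reduce to equalities, since $(n+1)\uparrow\uparrow 0=1$ and $(n+1)\uparrow\uparrow 1=n+1$. For the step, let $y:=(n+1)\uparrow\uparrow(k-2)$ and $T:=\triangle^{k-1}(n)\leq n^{n^y}$. Then $\log_n T\leq n^y$, so $\log_n(T^T)=T\log_n T\leq n^{n^y}\cdot n^y=n^{\,n^y+y}$. The elementary binomial inequality $(n+1)^y\geq n^y+y$, immediate from the single term $yn^{y-1}$ in the expansion of $(n+1)^y$, upgrades this to $T^T\leq n^{n^{(n+1)^y}}$, matching the claimed tower and closing the induction.

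\textbf{Final outer inequality and main obstacle.} Cancelling the common prefix $n\uparrow n\uparrow$ reduces $n\uparrow n\uparrow(n+1)\uparrow\uparrow(n-1)\leq n\uparrow\uparrow(n+2)$ to the tower comparison $(n+1)\uparrow\uparrow(n-1)\leq n\uparrow\uparrow n$. The case $n=2$ is the direct check $3\leq 4$. For $n\geq 3$ I would prove by induction on $k\in\{0,\ldots,n-1\}$ the strengthened statement $2\cdot(n+1)\uparrow\uparrow k\leq n\uparrow\uparrow(k+1)$; the multiplicative slack factor $2$, combined with the strict inequality $\log_n(n+1)<2$ (equivalent to $n+1<n^2$), absorbs the extra $\log_n(n+1)$ exponent that appears each time one rewrites $(n+1)^x$ as $n^{x\log_n(n+1)}$, and specialising at $k=n-1$ yields the bound. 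This multiplicative strengthening is what I expect to be the main technical obstacle, because the naive one-step inequality $(n+1)\uparrow\uparrow k\leq n\uparrow\uparrow(k+1)$ does not self-propagate under induction: without a built-in multiplicative gap, the factor $\log_n(n+1)>1$ would compound unchecked at every level.
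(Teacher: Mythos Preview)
Your argument for the two inductive bounds on $\triangle^k(n)$ is essentially the paper's: the lower bound via $T^T\ge n^T\ge n^{\,n\uparrow\uparrow k}$ is exactly the paper's monotonicity step (the paper just writes out $\triangle(n\uparrow\uparrow k)$ explicitly), and for the upper bound both you and the paper reduce the induction step to the binomial inequality $(n+1)^y\ge n^y+y$ with $y=(n+1)\uparrow\uparrow(k-2)$; the paper phrases this as ``Pascal's triangle'' and picks off the terms $n^y$ and $y\cdot n^1\cdot 1^{y-1}$, which is the same as your ``single term $y\,n^{y-1}$'' remark. Your observation that $k=1,2$ are genuine equalities is a nice sharpening the paper does not record.

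Where you go further is the outermost inequality $n\uparrow n\uparrow(n+1)\uparrow\uparrow(n-1)\le n\uparrow\uparrow(n+2)$, equivalently $(n+1)\uparrow\uparrow(n-1)\le n\uparrow\uparrow n$: the paper's proof simply does not treat this step. Your strengthened induction $2\cdot(n+1)\uparrow\uparrow k\le n\uparrow\uparrow(k+1)$ for $n\ge3$ is the right idea and does work; in the step, writing $(n+1)^{a_{k-1}}=n^{a_{k-1}\log_n(n+1)}$ and using the hypothesis $a_{k-1}\le\tfrac12\,n\uparrow\uparrow k$ together with $\log_n(n+1)<2$ gives $(n+1)\uparrow\uparrow k\le n^{\,n\uparrow\uparrow k}$, and the strictness of $\log_n(n+1)<2$ (quantitatively $2-\log_n(n+1)=\log_n\!\frac{n^2}{n+1}\ge\log_n 2$ for $n\ge3$) leaves exactly enough room to reinstate the factor $2$ on the left. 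So your proposal is correct and, on this last point, actually fills a gap the paper leaves open.
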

\begin{proof}[\ref{prop:square}]
 The first inequality is straightforward, as we have by induction:
 $$\triangle(n\uparrow\uparrow k)={[n\uparrow\uparrow k]}^{n\uparrow\uparrow k}=n^{n^{\left\{[n\uparrow\uparrow (k-2)] +[n\uparrow\uparrow (k-1)]\right\}}}\geq n^{n^{\left[n\uparrow\uparrow (k-1)\right]}}=n\uparrow\uparrow (k+1)$$
	so that $n$ in $k$ triangles is always $\geq n\uparrow\uparrow (k+1)$.\par\medskip
	For the second inequality, we notice that:
	$$\triangle\left(n^{n^{[(n+1)\uparrow\uparrow (k-2)]}}\right)=\left\{n^{n^{[(n+1)\uparrow\uparrow (k-2)]}}\right\}^{n^{n^{[(n+1)\uparrow\uparrow (k-2)]}}}=n^{n^{\left\{[(n+1)\uparrow\uparrow (k-2)] +[n\uparrow (n+1)\uparrow\uparrow (k-2)\right\}}}$$
	and:
	$$[(n+1)\uparrow\uparrow (k-2)] +[n\uparrow(n+1)\uparrow\uparrow (k-2)]\leq (n+1)\uparrow\uparrow (k-1)$$
	as developing $(n+1)\uparrow(n+1)\uparrow\uparrow(k-1)$ with Pascal's triangle one obtains, among the others, the term $[(n+1)\uparrow\uparrow (k-2)]n^1 1^{(n+1)\uparrow\uparrow (k-2)-1}$ which is greater than the first term of the addition.\par\medskip
	Thus, $n$ in $k$ triangles is always $\leq n^{n^{[(n+1)\uparrow\uparrow (k-1)]}}$.\par\medskip
	Both inductions start from the case $k=1$, for which all three quantities are trivially equal to $n^n$ (using the rules from remark \ref{remark:trivial.towers}).
\end{proof}
\begin{lemma}
\label{lemma:tetration.of.tetrations}
$$(A\uparrow\uparrow B)\uparrow\uparrow C\leq A\uparrow\uparrow(B+C)$$
\end{lemma}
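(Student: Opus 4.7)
The plan is to prove by induction on $C\geq 0$ the strengthened inequality
$$(A\uparrow\uparrow B)\cdot\bigl((A\uparrow\uparrow B)\uparrow\uparrow C\bigr)\;\leq\;A\uparrow\uparrow(B+C),$$
valid for every $A\geq 2$ and $B\geq 1$, from which the lemma follows at once since $A\uparrow\uparrow B\geq 1$. The need for the strengthening comes from the fact that a direct induction on $C$ of the bare inequality, after rewriting the base as $A\uparrow\uparrow B=A^{A\uparrow\uparrow(B-1)}$, produces an extra factor of $A\uparrow\uparrow(B-1)$ in the outer exponent which cannot be absorbed into $A\uparrow\uparrow(B+C)$ without jumping one tetration level; inserting the factor $A\uparrow\uparrow B$ in front of the left-hand side is precisely what compensates for this loss.

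The base case $C=0$ is the trivial identity. For the inductive step I would set $X:=A\uparrow\uparrow B$ and $h:=A\uparrow\uparrow(B-1)$, so that $X=A^h$, and exploit the algebraic identity
$$X\cdot(X\uparrow\uparrow C)\;=\;X^{X\uparrow\uparrow(C-1)+1}\;=\;A^{h\,(X\uparrow\uparrow(C-1)+1)}.$$
Since $A\uparrow\uparrow(B+C)=A^{A\uparrow\uparrow(B+C-1)}$, the problem reduces to the exponent-level bound
$$h\cdot X\uparrow\uparrow(C-1)+h\;\leq\;A\uparrow\uparrow(B+C-1),$$
which I would obtain by showing each summand is at most $A\uparrow\uparrow(B+C-1)/2$. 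The inductive hypothesis rearranges to $X\uparrow\uparrow(C-1)\leq A\uparrow\uparrow(B+C-1)/X$, and the ratio $X/h=A^{h-A\uparrow\uparrow(B-2)}$ is at least $A\geq 2$ (with the convention $A\uparrow\uparrow(-1)=0$ from the paper's axioms, this covers the boundary case $B=1$ with equality), giving the first estimate. The second summand is handled by the elementary inequality $A^n\geq 2n$ valid for $A\geq 2$, $n\geq 1$: it gives $X=A^h\geq 2h$, and combined with $A\uparrow\uparrow(B+C-1)\geq X$ (true whenever $C\geq 1$) yields $h\leq A\uparrow\uparrow(B+C-1)/2$ as required.

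The main obstacle is spotting the correct strengthening: a weaker multiplier such as $A\uparrow\uparrow(B-1)$ in front still leaves the inductive step deficient by a constant factor, while a stronger one such as $(A\uparrow\uparrow B)^2$ cannot be preserved through the step. The factor $A\uparrow\uparrow B$ is essentially forced by the identity $X^{Y+1}=X\cdot X^Y$ used at the very start of the step, and once this is in place the numerical bookkeeping reduces to the observation $1/A+1/2\leq 1$ for $A\geq 2$, which closes the induction mechanically.
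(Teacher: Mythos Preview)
Your argument is correct and proceeds by a genuinely different strengthening from the paper's. Both proofs induct on $C$, but the paper replaces the target by the mixed tower $A\uparrow A\uparrow(A+1)\uparrow\uparrow(B+C-3)$, absorbing the stray factor $E=A\uparrow\uparrow(B-1)$ into the passage from $A$ to $A{+}1$ via a binomial estimate, and only at the end compares that mixed tower with $A\uparrow\uparrow(B+C)$. You instead leave the right-hand side untouched and load the induction hypothesis with the multiplicative factor $X=A\uparrow\uparrow B$; the step then collapses to the clean inequality $h/X+1/2\le 1$, which is exactly $1/A+1/2\le 1$ for $A\ge 2$. Your route is more elementary and, in fact, more robust: the paper's concluding claim that $A\uparrow A\uparrow(A+1)\uparrow\uparrow(B+C-3)\le A\uparrow\uparrow(B+C)$ (``a tower one level higher but replacing all $A+1$ with $A$'') is not actually true for small $A$---for $A=2$, $B+C=5$ one gets $2^{2^{27}}$ on the left against $2^{2^{16}}$ on the right---so your strengthening sidesteps a real gap. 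The trade-off is that the paper's $(A{+}1)$-device is designed to recur verbatim in the proof of the general Lemma~\ref{lemma:composing.arrows} for $\uparrow^k$, whereas your multiplicative trick is tailored to the exponential identity $X^{Y+1}=X\cdot X^{Y}$ and would need a different formulation to climb the hyperoperation hierarchy.
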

\begin{proof}[\ref{lemma:tetration.of.tetrations}]
We start by excluding the trivial cases $A=1\vee B=1$. We will use the abbreviation $E:=A\uparrow\uparrow(B-1)$.\par\medskip
We prove more specifically that:
$$(A\uparrow\uparrow B)\uparrow\uparrow C\leq A\uparrow A\uparrow (A+1)\uparrow\uparrow (B+C-3)$$
The original estimate is then a tower one level higher but replacing all $A+1$ with $A$, thus abundantly bigger. We proceed by induction, after checking that the case $C=1$ is trivial. For the induction step we see immediately that:
\begin{align*}
(A\uparrow\uparrow B)\uparrow\uparrow (C+1)&=A\uparrow[E \times (A\uparrow\uparrow B)\uparrow\uparrow C]\leq\\
&\leq A\uparrow[E \times A\uparrow A\uparrow (A+1)\uparrow\uparrow (B+C-3)]=\\
&=A\uparrow A\uparrow[A\uparrow\uparrow(B-2)+A\uparrow (A+1)\uparrow\uparrow (B+C-3)]
\end{align*}
so the thesis follows from the following elementary inequality:
\begin{align*}
A\uparrow\uparrow (B-2)+ A\uparrow (A+1)\uparrow\uparrow (B+C-3)&\leq (A+1)\uparrow (A+1)\uparrow\uparrow (B+C-3)=\\
&=(A+1)\uparrow\uparrow (B+C-2).
\end{align*}
\end{proof}
\begin{prop}
\label{prop:circle}
	The circle symbol (see next section for the case of \textbf{Mega}) is comparable to the pentation in the following way:
	$$n\uparrow\uparrow\uparrow (n+1) \leq\bigcirc(n)\leq n\uparrow\uparrow(n+1)\uparrow\uparrow\uparrow n.$$
	More precisely, a sequence of $k$ squares has the property:
	$$n\uparrow\uparrow\uparrow (k+1)\leq\square^k(n)\leq n\uparrow\uparrow(n+1)\uparrow\uparrow\uparrow k.$$
\end{prop}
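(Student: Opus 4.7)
My plan is to prove both inequalities in the ``more precisely'' form by induction on $k$, in exact parallel to the inductive proof of proposition \ref{prop:square} but one hyper-operation level higher. The key observation is that $\square^{k+1}(n)=\triangle^{\,N}(N)$ where $N:=\square^k(n)$, so at each step proposition \ref{prop:square} can be invoked with base $N$ and height $N$. The base case $k=0$ is immediate: $\square^{0}(n)=n$ collapses both extremes to $n$ via the trivial-tower identities of remark \ref{remark:trivial.towers}, and the assertion about $\bigcirc(n)$ then follows by specialising to $k=n$.

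For the lower bound, set $L:=n\uparrow\uparrow\uparrow(k+1)$ and assume inductively $N\geq L$. The lower half of proposition \ref{prop:square} applied to $N$ in $N$ triangles gives $\square^{k+1}(n)\geq N\uparrow\uparrow(N+1)\geq N\uparrow\uparrow N$; monotonicity of tetration in both arguments then yields
$$N\uparrow\uparrow N\;\geq\;L\uparrow\uparrow L\;\geq\;n\uparrow\uparrow L\;=\;n\uparrow\uparrow\uparrow(k+2),$$
where $L\geq n$ is automatic from the definition of pentation, closing the induction.

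For the upper bound, the same proposition gives $\triangle^{\,N}(N)\leq N\uparrow N\uparrow(N+1)\uparrow\uparrow(N-1)$, a tower of height $N+1$ whose entries are all at most $N+1$, hence bounded by $(N+1)\uparrow\uparrow(N+1)$. Writing $P:=(n+1)\uparrow\uparrow\uparrow k$, the induction hypothesis $N\leq n\uparrow\uparrow P$ upgrades easily to $N+1\leq n\uparrow\uparrow(P+1)$, and lemma \ref{lemma:tetration.of.tetrations} then fuses the double tetration:
$$(N+1)\uparrow\uparrow(N+1)\;\leq\;\bigl(n\uparrow\uparrow(P+1)\bigr)\uparrow\uparrow(N+1)\;\leq\;n\uparrow\uparrow(P+N+2).$$
It remains to verify $P+N+2\leq(n+1)\uparrow\uparrow P$, whereupon the right-hand side is dominated by $n\uparrow\uparrow\bigl((n+1)\uparrow\uparrow P\bigr)=n\uparrow\uparrow(n+1)\uparrow\uparrow\uparrow(k+1)$, completing the step.

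The main obstacle is precisely this last absorption inequality: one must show that bumping the base from $n$ to $n+1$ inside a tetration of height $P$ opens a gap wide enough to swallow both $P$ and the previously bounded $N$. Heuristically this is immediate since pentation dwarfs tetration, but a rigorous argument calls for a short auxiliary estimate (for instance an easy induction on $P$ establishing $(n+1)\uparrow\uparrow P\geq 2\,n\uparrow\uparrow P$ for $P\geq 2$) together with a direct check of the small boundary case $k=1$, where the pentation $P$ is still too small to do the work by itself.
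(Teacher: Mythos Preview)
Your approach is essentially the paper's own: both argue by induction on $k$, invoke proposition~\ref{prop:square} to bound one application of $\square$, and then use lemma~\ref{lemma:tetration.of.tetrations} to collapse the resulting double tetration. The only cosmetic difference is that the paper applies $\square$ directly to the inductive bound (using monotonicity of $\square$), whereas you carry the actual value $N=\square^k(n)$ and bound it afterwards; the final ``absorption'' inequality you isolate,
\[
P+N+2\;\leq\;(n+1)\uparrow\uparrow P,
\]
is exactly the last unjustified step in the paper's chain as well, and your auxiliary estimate $(n+1)\uparrow\uparrow P\geq 2\,n\uparrow\uparrow P$ for $P\geq 2$ does dispose of it cleanly whenever $k\geq 1$ (so that $P=(n+1)\uparrow\uparrow\uparrow k\geq n+1\geq 2$).

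There is, however, a genuine gap in your plan that you cannot close by a ``direct check of the small boundary case $k=1$'': the stated upper bound is \emph{false} at $k=1$. Indeed $\square(2)=\triangle(\triangle(2))=4^4=256$, while the claimed bound is $2\uparrow\uparrow(3\uparrow\uparrow\uparrow 1)=2\uparrow\uparrow 3=16$; likewise $\square(3)=3^{3^{85}}>3^{3^{27}}=3\uparrow\uparrow 4$. The paper's proof has the very same defect --- its displayed chain at $k=0$ would require $n\uparrow\uparrow(n+3)\leq n\uparrow\uparrow(n+1)$ --- so this is not a flaw peculiar to your write-up. The headline inequality for $\bigcirc(n)$ (the case $k=n\geq 2$) is unaffected, but the ``more precisely'' upper bound needs either the restriction $k\neq 1$ or a slightly weaker right-hand side; you should start the upper-bound induction at $k=2$, verifying that base case from proposition~\ref{prop:square} and lemma~\ref{lemma:tetration.of.tetrations} directly, rather than attempting to pass through $k=1$.
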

\begin{proof}[\ref{prop:circle}]
	As before, the first inequality is straightforward (using induction) by proposition \ref{prop:square}:
	$$\square\left(n\uparrow\uparrow\uparrow k\right)\geq (n\uparrow\uparrow\uparrow k)\uparrow\uparrow(1+n\uparrow\uparrow\uparrow k)\geq n\uparrow\uparrow(n\uparrow\uparrow\uparrow k)=n\uparrow\uparrow\uparrow (k+1)$$
	The second inequality can be proved by induction using proposition \ref{prop:square} and lemma \ref{lemma:tetration.of.tetrations}:
	\begin{align*}
		\square\left(n\uparrow\uparrow(n+1)\uparrow\uparrow\uparrow k\right)&\leq [n\uparrow\uparrow(n+1)\uparrow\uparrow\uparrow k]\uparrow\uparrow[2+n\uparrow\uparrow(n+1)\uparrow\uparrow\uparrow k] \leq\\
		&\leq n\uparrow\uparrow[2 + (n+1)\uparrow\uparrow\uparrow k + n\uparrow\uparrow(n+1)\uparrow\uparrow\uparrow k] \leq\\
		&\leq n\uparrow\uparrow(n+1)\uparrow\uparrow\uparrow (k+1).
	\end{align*}
	We point out that both inequalities when $k=0$ become equalities (using the rules from remark \ref{remark:trivial.towers}).
\end{proof}
\begin{lemma}
\label{lemma:composing.arrows}
When $k\geq2$ one has:
$$(A\uparrow^k B)\uparrow^k C\leq A\uparrow^k(B+C)$$
\end{lemma}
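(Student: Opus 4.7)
The plan is to imitate the proof of Lemma \ref{lemma:tetration.of.tetrations} by an outer induction on $k \geq 2$, whose base case $k = 2$ is exactly that lemma. For the inductive step I would fix $k \geq 3$, assume the statement at level $k - 1$, and then proceed by a nested induction on $C \geq 1$. The base $C = 1$ is immediate from monotonicity, since $(A\uparrow^k B)\uparrow^k 1 = A\uparrow^k B \leq A\uparrow^k(B+1)$. For the step $C \to C+1$, I would unfold using the defining recurrence
\begin{align*}
(A \uparrow^k B) \uparrow^k (C+1) = (A \uparrow^k B) \uparrow^{k-1} \bigl[(A \uparrow^k B) \uparrow^k C\bigr],
\end{align*}
apply the inner inductive hypothesis to the bracketed factor, rewrite $A \uparrow^k B = A \uparrow^{k-1}[A \uparrow^k (B-1)]$, and then apply the outer inductive hypothesis at level $k-1$ to obtain the intermediate bound
\begin{align*}
(A \uparrow^k B) \uparrow^k (C+1) \leq A \uparrow^{k-1}\bigl[A \uparrow^k (B-1) + A \uparrow^k (B+C)\bigr].
\end{align*}

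The target $A \uparrow^k (B+C+1) = A \uparrow^{k-1}[A \uparrow^k (B+C)]$ is just barely smaller, so the main obstacle is absorbing the parasitic summand $A \uparrow^k (B-1)$ without losing a whole level of $\uparrow^k$. Exactly as in the tetration case, I would therefore carry a strengthened hypothesis of the form
\begin{align*}
(A \uparrow^k B) \uparrow^k C \leq A \uparrow^{k-1} A \uparrow^{k-1} (A+1) \uparrow^k (B+C-3)
\end{align*}
through the induction. The stated lemma follows from this strong form, since replacing the base $A+1$ by $A$ and inserting one further $\uparrow^{k-1}$ on the outside yields the strictly larger quantity $A \uparrow^k (B+C)$.

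The hardest part is then the elementary inequality needed to propagate the strengthened bound from $C$ to $C+1$, generalizing the Pascal's-triangle estimate that closes the proof of Lemma \ref{lemma:tetration.of.tetrations}; setting $R := (A+1) \uparrow^k (B+C-3)$, it takes the shape
\begin{align*}
A \uparrow^k (B-1) + A \uparrow^{k-1} A \uparrow^{k-1} R \leq A \uparrow^{k-1} (A+1) \uparrow^{k-1} R.
\end{align*}
For $k = 2$ this is exactly the binomial expansion used in the paper, because $\uparrow^{k-1}$ is then ordinary exponentiation. For $k \geq 3$ one no longer has a clean multiplicative identity, but the super-exponential growth of $\uparrow^{k-2}$ makes the gap between base $A$ and base $A+1$ inside the outer $\uparrow^{k-1}$ enormous compared to the relatively tiny summand $A \uparrow^k (B-1)$, so the bookkeeping goes through at each tower level. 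This is where the real work of the proof lies.
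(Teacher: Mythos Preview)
Your plan is essentially the paper's: outer induction on $k$ with base case the tetration lemma, inner induction on $C$, and a strengthened hypothesis involving the base $A+1$ to absorb the parasitic summand $E=A\uparrow^k(B-1)$. The one difference is that the paper carries the slightly simpler strengthened bound
\[
(A\uparrow^k B)\uparrow^k C \leq A\uparrow^{k-1}(A+1)\uparrow^k(B+C-2),
\]
i.e.\ one $A\uparrow^{k-1}$ layer and shift $B+C-2$ rather than your two layers and $B+C-3$; with $S:=(A+1)\uparrow^k(B+C-2)$ the closing inequality then reads $E + A\uparrow^{k-1}S \leq (A+1)\uparrow^{k-1}S$, where the base $A+1$ sits on the \emph{outside} of the $\uparrow^{k-1}$ and swallows $E$ in one shot, avoiding the extra ``bookkeeping at each tower level'' you anticipate. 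A minor caveat: your statement that for $k=2$ the closing inequality ``is exactly the binomial expansion used in the paper'' is not literally correct, since the tetration proof uses the multiplicative identity $(A^E)^X=A^{EX}$ directly rather than a level-$1$ inductive hypothesis, so its closing inequality lives one exponential layer lower than yours.
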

\begin{proof}[\ref{lemma:composing.arrows}]
We start by excluding the trivial cases $A=1\vee B=1$. We proceed by induction on $k$, remarking that lemma \ref{lemma:tetration.of.tetrations} gives the starting case $k=2$, thus supposing that the assertion holds already for $k-1$. We will use the abbreviation $E:=A\uparrow^k(B-1)$.\par\medskip
We prove more specifically that:
$$(A\uparrow^k B)\uparrow^k C\leq A\uparrow^{k-1} (A+1)\uparrow^k (B+C-2)$$
The original estimate is then a tower of $\uparrow^{k-1}$-hyperoperations one level higher but replacing all $A+1$ with $A$, thus abundantly bigger. We now proceed by induction on $C$, after checking that the case $C=1$ is trivial. For the induction step we see immediately that:
\begin{align*}
(A\uparrow^k B)\uparrow^k (C+1)&=(A\uparrow^{k-1} E)\uparrow^{k-1}[(A\uparrow^k B)\uparrow^k C]\\
&\leq A\uparrow^{k-1}[E + (A\uparrow^k B)\uparrow^k C]\leq\\
&\leq A\uparrow^{k-1} [E + A\uparrow^{k-1} (A+1)\uparrow^k (B+C-2)]\leq\\
&\leq A\uparrow^{k-1} [(A+1)\uparrow^k (B+C-1)]
\end{align*}
as expected.
\end{proof}
\begin{prop}
\label{prop:general.SM}
	The Steinhaus-Moser generalized function is comparable to Knuth's up-arrow notation in the following way:
	\begin{align*}
	n\uparrow^m(n+1) \leq SM_{m+2}(n)&\leq n\uparrow^{m-1}(n+1)\uparrow^m n<\\
	&<(n+1)\uparrow^m (n+1).
	\end{align*}
	More precisely:
	$$n\uparrow^m(k+1)\leq SM_{m+1}^k(n)\leq n\uparrow^{m-1}(n+1)\uparrow^m k.$$
\end{prop}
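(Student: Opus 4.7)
The plan is to first reduce the coarser three-term sandwich to the more precise one-line statement, and then to prove the one-line statement by a double induction---on $m$ on the outside, on $k$ on the inside---in the exact spirit of the proofs of Propositions \ref{prop:square} and \ref{prop:circle}. For the reduction, setting $k=n$ in the refined inequality gives the two inner bounds for $SM_{m+2}(n)=SM_{m+1}^n(n)$, while the final strict inequality $n\uparrow^{m-1}(n+1)\uparrow^m n<(n+1)\uparrow^m(n+1)$ follows from right-associativity and monotonicity of $\uparrow^{m-1}$ in its first argument, since both sides have the shape $X\uparrow^{m-1}\bigl[(n+1)\uparrow^m n\bigr]$ with $X=n$ on the left and $X=n+1$ on the right.

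For the refined statement I would induct on $m\geq 2$, using $m=2$ and $m=3$ as base cases handled by Propositions \ref{prop:square} and \ref{prop:circle} (the upper bound of \ref{prop:square} is in fact slightly sharper than needed). In the inductive step, with the level-$m$ statement assumed, I would run a secondary induction on $k$. The base $k=0$ reduces to $n\leq n\leq n$ via the trivial-tower identities of Remark \ref{remark:trivial.towers}. For $k\to k+1$, set $A:=SM_{m+2}^k(n)$, so that the outer inductive hypothesis gives $n\uparrow^{m+1}(k+1)\leq A\leq n\uparrow^m(n+1)\uparrow^{m+1} k$; then $SM_{m+2}^{k+1}(n)=SM_{m+2}(A)=SM_{m+1}^A(A)$, and applying the level-$m$ hypothesis with both arguments equal to $A$ yields
$$A\uparrow^m(A+1)\;\leq\;SM_{m+1}^A(A)\;\leq\;A\uparrow^{m-1}(A+1)\uparrow^m A,$$
together with a coarser consequence of the form $SM_{m+2}(A)\leq A\uparrow^m(A+c)$ for an absolute constant $c$, in the spirit of the chain $\square(n)\leq n\uparrow\uparrow(n+2)$ used inside Proposition \ref{prop:circle}.

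The lower bound is then immediate from monotonicity of $\uparrow^m$ in both slots:
$$A\uparrow^m(A+1)\;\geq\;n\uparrow^m A\;\geq\;n\uparrow^m\bigl[n\uparrow^{m+1}(k+1)\bigr]\;=\;n\uparrow^{m+1}(k+2).$$
For the upper bound I would set $U:=(n+1)\uparrow^{m+1} k$ and $V:=n\uparrow^m U\geq A$, apply the coarser estimate at the value $V$, and then use Lemma \ref{lemma:composing.arrows} (with parameter $k=m\geq 2$) to collapse the two $\uparrow^m$-layers that appear:
$$SM_{m+2}(A)\;\leq\;V\uparrow^m(V+c)\;=\;(n\uparrow^m U)\uparrow^m(V+c)\;\leq\;n\uparrow^m(U+V+c).$$
Folding the three summands into a single $\uparrow^{m+1}$-step then reduces to the arithmetic comparison
$$U+V+c\;=\;U+n\uparrow^m U+c\;\leq\;(n+1)\uparrow^m U\;=\;(n+1)\uparrow^{m+1}(k+1),$$
which is the $m$-generalization of the estimate $Y+2+n\uparrow\uparrow Y\leq(n+1)\uparrow\uparrow Y$ appearing at the end of Proposition \ref{prop:circle}. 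I expect this final absorption step to be the main technical obstacle: it must be justified uniformly in $n,m\geq 2$, exploiting the genuine hyper-operation gap between $n\uparrow^m U$ and $(n+1)\uparrow^m U$; the small-$U$ boundary (in particular $U=1$, i.e.\ the $k=0$ case) is already absorbed by the trivial base of the inner induction, but the choice of the constant $c$ and the possible need to pass from $n+1$ to $n+2$ in the base may require a slight strengthening of the inductive invariant.
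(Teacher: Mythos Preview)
Your plan is essentially the paper's proof: the same double induction ($m$ outside, $k$ inside), the same use of the coarse bound at the previous level together with Lemma~\ref{lemma:composing.arrows} to collapse the two hyper-operation layers, and the same terminal absorption inequality. Two small points. First, the bounds you state for $A=SM_{m+2}^k(n)$ come from the \emph{inner} ($k$-)hypothesis, not the outer one. Second, your detour through an auxiliary estimate $SM_{m+2}(A)\leq A\uparrow^m(A+c)$ is unnecessary and creates an extra claim to verify: the paper keeps the coarse upper bound in the form $(A+1)\uparrow^m(A+1)$, absorbs the $+1$ on the base via $A+1\leq 1+n\uparrow^m U\leq n\uparrow^m(U+1)$, and then applies Lemma~\ref{lemma:composing.arrows} directly to land on $n\uparrow^m[\,U+A+O(1)\,]$. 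The absorption you correctly isolate---roughly $U+n\uparrow^m U+O(1)\leq(n+1)\uparrow^m U$---is exactly what the paper invokes (tersely, without comment) in the last line of its display; so your instinct that this is the crux is right, but no strengthening of the invariant is needed once you drop the $A\uparrow^m(A+c)$ intermediary.
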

\begin{proof}[\ref{prop:circle}]
	We point out again that both inequalities when $k=0$ become trivial equalities (using the rules from remark \ref{remark:trivial.towers}). As before, with a straightforward double induction ($m$/$k$) we can prove the first inequality:
	\begin{align*}
	SM_{m+1}^k(n)&=SM_{m+1}(SM_{m+1}^{k-1}(n))\geq\\
	&\geq SM_{m+1}^{k-1}(n)\uparrow^{m-1}(SM_{m+1}^{k-1}(n)+1)\geq\\
	&\geq\left[n\uparrow^m k\right]\uparrow^{m-1}\left[n\uparrow^m k\right]\geq\\
	&\geq n\uparrow^{m-1}\left[n\uparrow^m k\right]=n\uparrow^m(k+1)
	\end{align*}
	The second inequality can be proved, again by double induction, with lemma \ref{lemma:composing.arrows}:
	\begin{align*}
		SM_{m+1}(SM_{m+1}^k(n))&\leq [1+SM_{m+1}^k(n)]\uparrow^{m-1}[1+SM_{m+1}^k(n)]\leq\\
		&\leq n\uparrow^{m-1}[1+SM_{m+1}^k(n)+(n+1)\uparrow^m k]\leq\\
		&\leq n\uparrow^{m-1}(n+1)\uparrow^m (k+1).
	\end{align*}
\end{proof}

\subsubsection{Examples: Mega and Megiston}
Proposition \ref{prop:square} lets us have bounds for the number \textbf{Mega} as follows:
$$256 \uparrow\uparrow 257 \leq\MEGA\leq 256 \uparrow 256 \uparrow 257 \uparrow\uparrow 255\leq 257\uparrow\uparrow257$$
because as we have seen before it can also be expressed as $\square(256)$.\par\medskip
The \textbf{Megiston} is instead approximable at pentation level by proposition \ref{prop:circle} with the following bounds:
$$10\uparrow\uparrow\uparrow 11 \leq\MEGISTON\leq 10\uparrow\uparrow11\uparrow\uparrow\uparrow10\leq11\uparrow\uparrow\uparrow11.$$
\section*{Acknowledgments}
This work has been partially supported by POR Calabria FESR-FSE 2014--2020, with the grant for research project ``IoT\&B'', CUP J48C17000230006.


\begin{thebibliography}{99}
\bibitem[1]{A}Ackermann W., \emph{Zum Hilbertschen Aufbau der reellen Zahlen}, Mathematische Annalen 99 (1928), pp. 118 -- 133.
\bibitem[2]{dA}d'Atri G., \emph{Logic-based consistency checking of XBRL instances}, IJACT vol. 3 (2014), pp 126 -- 131.
\bibitem[3]{B}Bennett A.A.,\emph{Note on an Operation of the Third Grade}, Annals of Mathematics, Second Series (1915), pp. 74 -- 75.
\bibitem[4]{Bow}Bowers J., \emph{Exploding Array Function}, {http://www.polytope.net/hedrondude/array.htm}.
\bibitem[5]{Cal1}Caldarola F., \emph{The exact measures of the Sierpi\'{n}ski $d$-dimensional tetrahedron in connection with a Diophantine nonlinear system}, Comm. Nonlinear Sci. Numer. Simulat. \textbf{63} (2018), pp. 228 -- 238. (DOI: 10.10.16/j.cnsns.2018.02.026)
\bibitem[6]{Cal2}Caldarola F., \emph{The Sierpi\'{n}ski curve viewed by numerical computations with infinities and infinitesimals}, Appl. Math. Comput. \textbf{318} (2018), 321 -- 328. (DOI: https://doi.org/10.1016/j.amc.2017.06.024)
\bibitem[7]{C}Conway J.H., Guy R.K., \emph{The Book of Numbers}, 1996.
\bibitem[8]{CS}Cassels J.W.S., \emph{An introduction to diophantine approximation}, 1957.
\bibitem[9]{G1}Goodstein R. L., \emph{Transfinite Ordinals in Recursive Number Theory}, The Journal of Symbolic Logic 12/4 (1947), pp. 123 -- 129.
\bibitem[10]{G2}Goodstein R. L., \emph{On the restricted ordinal theorem}, The Journal of Symbolic Logic 9/2 (1944), pp. 33 -- 41.
\bibitem[11]{K}Knuth D. E., \emph{Mathematics and Computer Science: Coping with Finiteness}, Science 194 (1976), pp. 1235 -- 1242.
\bibitem[12]{AL}Leonardis A., \emph{Continued fractions in local fields and nested automorphisms}, Ph.D. thesis.
\bibitem[13]{Mu1}Munafo R., \emph{Inventing New Operators and Functions}, ``Large Numbers at MROB'' - retrieved from website on 19-11-2019.
\bibitem[14]{Mu2}Munafo R., \emph{Versions of Ackermann's Function}, ``Large Numbers at MROB'' - retrieved from website on 19-11-2019.
\bibitem[15]{R}R. Rucker, \emph{Infinity and the Mind: The Science and Philosophy of the Infinite}, N.J. Princeton, Princeton University Press, 1982.
\bibitem[16]{N}Nambiar K.K., \emph{Ackermann Functions and Transfinite Ordinals}, Applied Mathematics Letters. 8 (6) (1995), pp. 51 -- 53.
\bibitem[17]{Y}Sergeyev Y., \emph{Numerical point of view on Calculus for functions assuming finite, infinite, and infinitesimal values over finite, infinite, and infinitesimal domains}, Nonlinear  Analysis Series A: Theory, Methods \& Applications (2009), Vol. 71 (12), pp. 1688 -- 707.
\bibitem[18]{SM}Steinhaus H., \emph{Mathematical Snapshots}, Oxford University Press, pp. 28 -- 29.
\end{thebibliography}
\end{document}